\newtheorem{proposition}{Proposition}
\newtheorem{proof}{Proof}
\newtheorem{remark}{Remark}
\begin{document}

\title{User Pairing, Link Selection and Power Allocation for Cooperative NOMA Hybrid VLC/RF Systems }

\setlength{\columnsep}{0.21 in}

\author{Mohanad~Obeed,~\IEEEmembership{Student Member,~IEEE,}
        Hayssam~Dahrouj,~\IEEEmembership{Senior Member,~IEEE,}
        Anas~M.~Salhab,~\IEEEmembership{Senior Member,~IEEE,}
        Salam~A.~Zummo,~\IEEEmembership{Senior Member,~IEEE,} and
        Mohamed-Slim~Alouini,~\IEEEmembership{Fellow,~IEEE}}
\maketitle

\begin{abstract}

Despite the promising high-data rate features of visible light communications (VLC), they still suffer from unbalanced services due to blockages and channel fluctuation among users. This paper introduces and evaluates a new transmission scheme which adopts cooperative non-orthogonal multiple access (Co-NOMA) in hybrid VLC/radio-frequency (RF) systems, so as to improve both system sum-rate and fairness. Consider a network consisting of one VLC access point (AP) and multiple strong and weak users, where each weak user is paired with a strong user. Each weak user can be served either directly by the VLC AP, or via the strong user which converts light information received through the VLC link, and forwards the information to the weak user via the RF link. The paper then maximizes a network-wide weighted sum-rate, so as to jointly determine the strong-weak user-pairs, the serving link of each weak user (i.e., either direct VLC or hybrid VLC/RF), and the power of each user message, subject to user connectivity and transmit power constraints. The paper tackles such a mixed-integer non-convex optimization problem using an iterative approach. Simulations show that the proposed scheme significantly improves the VLC network performance (i.e., sum-rate and fairness) as compared to the conventional NOMA scheme.

\end{abstract}
\begin{IEEEkeywords}
Visible light communication, non-orthogonal multiple-access (NOMA), Cooperative NOMA, energy harvesting, power allocation, weighted sum-rate.
\end{IEEEkeywords}
\IEEEpeerreviewmaketitle
\section{Introduction}
\subsection{Overview}
Toward meeting the escalating data rates demand, visible light provides a valuable, useful  spectrum for transmitting data, which complements conventional radio-frequency (RF) communication systems. As the energy-efficient light emitting diodes (LEDs) have become more popular as light sources in indoor and outdoor environments, visible light communication (VLC) has emerged as a promising energy-efficient solution for transmitting information using LEDs. It has been proven that VLC networks can provide data rates of several Giga-bits per second \cite{schrenk2018visible, tsonev2015towards}, which makes them a powerful alternative (or complementary)  to RF networks. However,  multi-user VLC networks still suffer from the limited coverage, the channel quality fluctuation among users, and the blockages effect on the VLC link, resulting in unbalanced services, whereby some users with strong channel gains (hereafter denoted as strong users) are well served, and other users with weak channel gains (hereafter denoted as weak users) are poorly served. The communication in VLC networks is also a strong function of the existence of line-of-sight (LoS) links, which in turn get significantly attenuated with distance between the transmitter and the receiver, thereby limiting the coverage area to tiny cells, also known as attocells \cite{Yin2018}. Another challenging parameter in characterizing VLC networks is the users' distinct field-of-views (FoVs), which affect the users coverage and channels quality, and lead to unbalanced performance among the different users \cite{Obeed2018}.



In an effort to increase the throughput and improve the
fairness in wireless systems, non-orthogonal multiple access (NOMA) technique was introduced in the recent wireless literature. NOMA is based on sharing the resource components (subcarrier, spreading code, or time slot) by more than one user. This can be implemented by assigning a low power for the strong users, and a high power for the weak ones.
The weak users then decode their own messages and treat the strong users' messages as interference, while the strong users first decode the weak users' messages, remove it from the total received signal, and finally detect their message. This scheme improves the system performance in VLC networks; however, it cannot extend the system coverage or mitigate the blocking effect.

In RF networks, a recently proposed alternative to NOMA is cooperative NOMA (Co-NOMA), specifically proposed to improve the fairness and strengthen the signal-to-noise ratio (SNR) at the weak users \cite{Liu2016}, by exploiting any redundant information in NOMA. This could be implemented in practical networks, where the strong user can also work as a relay to assist the weak user. The weak user then combines both signals coming from the transmitter and from the the strong user.  This technique (Co-NOMA) has not been applied in VLC networks in the past literature, mainly because of the physical challenges incurred while forwarding
the light signal by the strong user to the weak user. One possible technique to overcome such challenges, however, is to convert the received light signal at the strong user into an RF signal, and then forward it to the weak user through an RF link. This paper adopts such Co-NOMA scheme in an hybrid VLC/RF system, and thoroughly illustrates its capabilities in enhancing the sum-rate and fairness as compared to the conventional NOMA scheme.


Consider a VLC network consisting of one AP and two predefined sets of strong and weak users. Each weak user can be paired by one strong user through an RF link. The strong user is served directly by the VLC link. Each weak user can be either served by the VLC link, or by the strong user through the RF link by means of Co-NOMA. The performance of the system becomes, therefore, a function of the user pairing, link selection and the messages powers.
The paper then focuses on maximizing a network-wide weighted sum-rate, so as to jointly pair the strong and weak users, allocate their respective powers, and select the mode of operation of each weak user (i.e., direct VLC or hybrid VLC/RF).

\subsection{Related Work} 

This subsection presents an overview of the recent state-of-art in VLC systems, with a special focus on the works which mitigate SNR fluctuations, manage NOMA networks, and analyse CO-NOMA in the realm of VLC systems.

Different techniques have been investigated in the literature to mitigate the SNR fluctuations so as to increase the system coverage probability, and improve the system performance in terms of both total achievable data rate and system fairness  \cite{coop,hand,wang2017optimization,wu2017access,JOCN, globecom,hanzo_haas,TWC,chen2013joint, chen2017performance}. Among the common solutions to strike such a trade-off between throughput and fairness is deploying hybrid VLC/RF networks \cite{hand,wang2017optimization,wu2017access,JOCN, globecom}, coordinating transmissions \cite{coop, hanzo_haas,TWC,chen2013joint, chen2017performance}, and supporting the network with relay-assisted VLC transmission \cite{dual1, dual2}. Hybrid VLC/RF technology, in particular, helps supporting the VLC standalone systems by one or multiple RF APs. The main idea in such systems is to compromise between the high VLC capacity and high RF coverage, which is often realized by assigning the users which suffer from interference, blockages, frequent handover, or low-quality channel in VLC network to be served by  RF AP(s), while serving the rest of users by the VLC AP(s)\cite{Obeed2018}. In the same direction, the authors in \cite{coop} tackle the problem of jointly optimizing the time slots and assigning the APs to the users in a hybrid VLC/WiFi system. Supplementing VLC by RF AP(s) is further shown to support mobility and decrease the handover overhead\cite{hand, wang2017optimization}. Authors in  \cite{wu2017access} show that the users that experience blockages with high rate should be served by the RF network. References \cite{JOCN, globecom} allocate the power and assign the users to VLC and RF APs to reduce the interference effect and to maximize the system capacity and fairness. References \cite{coop, hand, wang2017optimization, wu2017access, JOCN, globecom}, however, do not consider neither non-orthogonal multiple access schemes and do not allow any level of cooperation among users, unlike our current paper which addresses the rate-fairness balancing problem by means of adopting a Co-NOMA scheme in hybrid VLC/RF systems.

To further alleviate the VLC limitations, cooperation among APs is proposed in the recent literature, e.g., \cite{coop,hanzo_haas,TWC}, by merging the cells and boosting the users quality-of-service (QoS) by coordinating the APs transmissions. Such coordination allows the APs to cancel the interference \cite{chen2013joint,hanzo_haas, TWC}, increase the cell coverage \cite{coop}, and mitigate the blockages effect \cite{ chen2017performance}. Relaying is also investigated in VLC networks to extend the VLC coverage \cite{dual1,dual2, Kizilirmak2015}. In \cite{dual1} and \cite{dual2}, dual-hop hybrid VLC/RF links are proposed to serve uncovered users by means of relaying. Under such schemes, the relay harvests the energy and receives the signal through the first hop VLC link, and then forwards the signal to the receiver through the second hop RF link. The current paper adopts such a relaying scheme in a hybrid VLC/RF link, albeit under a different system model which considers a Co-NOMA scheme that gives the weak user the opportunity to be served either through the VLC link or through the hybrid VLC/RF link.

NOMA has extensively been investigated in RF networks, and has shown to enhance the spectral efficiency and the system fairness\cite{ding2017survey,liu2017nonorthogonal}. In the context of VLC networks, reference \cite{kizilirmak2015non} shows the superiority of NOMA over orthogonal-frequency division multiple access (OFDMA) with regard to sum-rate. The authors in \cite{ yin2016performance} evaluate the NOMA-VLC system and compare it to orthogonal multiple-access (OMA)-VLC scheme with and without QoS constraints. The authors in \cite{ yapici2018non} evaluate and compare the NOMA and OMA when the locations and the vertical orientations of the devices is changing.  For multiple APs, the work in \cite{ marshoud2016non } proposes a gain ratio power allocation (GRPA) method and compares it with the fixed power allocation method, when  the users' movement model follows the random walk model. For multi-cell VLC networks, the users in \cite{zhang2017user} are classified based on the received interference. In reference \cite{zhang2017user}, special resource blocks are assigned for the interfering users, while NOMA is applied for the interference-free users. Recently, the authors in \cite{Zhou2018} investigate the outage capacity for a limited system model, consisting of one VLC AP and two users (one is covered, while the other is out of the coverage). The covered user uses the Co-NOMA scheme to forward the uncovered user signal through the RF link, and the uncovered user does not have the option to be served directly through the VLC link.

In all the above studies, however, jointly allocating the power, selecting the link, and pairing the users have not been studied in the context of CO-NOMA in hybrid VLC/RF systems, which the current paper tackles. Additionally, to the authors' best knowledge, this is the first paper which proposes a scheme that provides the weak users with the possibility of being served through either the VLC link or through the hybrid VLC/RF link.


\subsection{Contributions}
Unlike the aforementioned papers, this paper introduces and evaluates a cooperation scheme among users in VLC networks based on hybrid VLC/RF Co-NOMA. More precisely, the paper considers a system model, consisting of one VLC AP and multiple users classified as either strong or weak users, strong and weak users. A transmission scheme based on hybrid VLC/RF Co-NOMA scheme is proposed, where the strong users receive their own signals through the direct VLC link. The paper transmission scheme then provides two options for each weak user, either to be served by the direct VLC link, or to be served by the relayed hybrid VLC/RF link that is provided by a strong user which can help the weak user by forwarding their signal through the RF link. This technique extends the VLC coverage area and helps serving the blocked users, which leads to improving the fairness and balancing the load in VLC systems. To this end, the paper addresses the problem of maximizing a network-wide weighted sum-rate, so as to jointly pair the strong and weak users, allocate their respective powers, and select the link of operation of every weak user (i.e., direct VLC or hybrid VLC/RF). The paper contributions can then be summarized as follows:
\begin{itemize}
\item The paper formulates a mixed discrete-continuous optimization problem that jointly pairs the users, selects the serving link for each weak user, and allocates the messages' power to maximize the weighted sum-rate of the system under user connectivity and maximum power constraints.

\item  The paper solves the mixed-integer non-convex optimization problem by proposing an iterative algorithm, which iteratively finds each of the optimization parameter by fixing all others. The paper particularly derives closed-form waterfilling-like solutions for the power allocation problem, and proposes well-chosen heuristics for finding the user pairing and link selection parameters. The paper also proposes updating the weights of the weighted-sum rate objective in an outer loop to achieve a proportional fairness of the system where, in each iteration, the weight of every user is determined to be inversely proportional to the long-term average rate of that user \cite{Yu2011}.



\item The paper compares the proposed solutions both to a simpler baseline approach, and to the conventional NOMA scheme. The paper simulation results illustrate how the proposed hybrid VLC/RF Co-NOMA scheme significantly improves the VLC network performance, both in terms of sum-rate and fairness, as compared to the conventional NOMA scheme.
\end{itemize}
The rest of this paper is organized as follows. The system and channel models are presented and discussed in Section II. Section III formulates the optimization problem and presents the proposed algorithms. Simulation results are illustrated in Section IV. Finally, we conclude the paper in Section V.
\section{System and Channel Models}
\subsection{System Model}

The system model considered in this paper consists of a VLC AP and $N_u=2K$ multiple users, where $K$ is the number of pairs. The users are classified into two sets, a set $\mathbb{U}_s$ of $K$ users, defined as strong users, which are the users with strong VLC channels, and a set  $\mathbb{U}_w$ of $K$ weak users, which are the users with the weak VLC channels. Fig. \ref{SM} shows an example of the considered network which serves three weak users, and three strong users. Without loss of generality, the paper assumes that the number of users are even, which facilitates the investigation of the proposed Co-NOMA technique. In the case where the number of users are odd, one can assume the existence of one more additional virtual user with a zero channel gain (and equivalently a zero achievable rate).

The paper adopts a Co-NOMA scheme, where the available bandwidth is divided equally into $K$ blocks. Each spectrum block is shared by a pair of users (one strong and one weak user). The weak user in each pair can be served either directly by the VLC AP through the VLC link, or by the dual-hop hybrid VLC/RF link through the paired strong user. In the case where the weak user is not well served by the VLC, the strong user would act as an energy harvesting relay which harvests the energy from the VLC AP (using the received visible light), and would then use it to forward the message to the paired weak user using the RF link.  Hereinafter, we illustrate the VLC  and RF channel model, the energy harvesting signals, and the transmission scheme.
\begin{figure}[!t]
\centering
\includegraphics[width=4in]{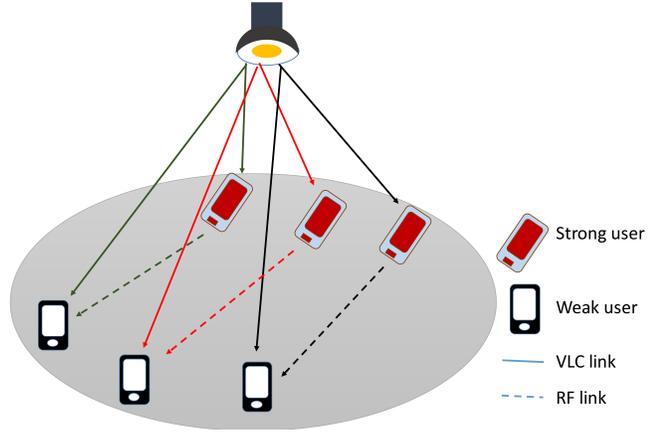}
\caption{An example of the considered system with three weak users, and three strong users.}
\label{SM}
\end{figure}
\subsection{VLC and RF Channel Model}
 According to \cite{wir}, the VLC channel between the AP and the $j$th user, denoted by $h_{j}$, is given by
\begin{equation}
\label{vlcch}
h_{j}=\frac{(m+1)A_{p}}{2\pi d_{j}^2} \cos^m(\phi)g_{of}\cos(\theta)f(\theta),
\end{equation}%
where $A_p$ is the photo-detector (PD) physical area, $m=-\left(\log_2(\cos(\theta_{\frac{1}{2}})\right)^{-1}$ is the Lambertian index, $\theta_{\frac{1}{2}}$ is the semi-angle of half power,  $d_{j}$ is the distance between the  AP and the $j$th user, $g_{of}$ is  the gain of the optical filter, $\phi$ is the LED  radiance angle, $\theta$ is the PD incidence angle, and $f(\theta)$  is the gain of the optical concentrator given by
\begin{equation}
\label{FoVE}
f(\theta)= \begin{cases}\frac{n^2}{\sin^2(\Theta)}, & \theta\leq\Theta; \\
0, & \theta>\Theta,
\end{cases}
\end{equation}
where $n$ is the refractive index and $\Theta$ is the semi-angle
of the user's field-of-view (FoV). We assume that if the LoS between the AP and the PD, the channel is zero, otherwise the channel is given by (\ref{vlcch}).

According to \cite{Perahia2013}, the RF channel gain in indoor environment between the $j$th user and the $i$th user is given by
\begin{equation}
G_{i,j}^{RF}=\vert H^{(RF)}_{i,j}\vert^2 10^{-\frac{L(d_{i,j})}{10}},
\end{equation}
where $H^{(RF)}_{i,j}$ is  the RF multipath propagation channel and $L(d_{i,j})$ is the path loss between the $j$th user and the $i$th user, where $d_{i,j}$ is the distance between $i$ and $j$ users.
\subsection{Transmission Scheme: Direct Links}
In the rest of this paper, we associate the subscript $i$ with weak users, and $j$ with strong users. Following the NOMA principle, suppose that one weak user $i$ is paired with one strong user $j$, the expression of the transmitted signal from the AP towards the user-pair $i$ and $j$ is given by
\begin{equation}
y_{i,j}=\nu\sqrt{P_{j}^{(s)}}s_j+\nu\sqrt{P_i^{(w)}}s_i+\nu b,
\end{equation}
where $P_j^{(s)}$ and $P_i^{(w)}$ are the powers of the strong and weak users assigned for $s_j$ and $s_i$ messages, respectively, $b$ is the direct-current (DC) that must be added to guarantee that the transmitted signal is non-negative, and $\nu$ is the proportionality factor of the electric to optical power conversion. The received signals at the $j$th and  $i$th users, respectively, are given by
\begin{equation}
\label{y1}
y_j=\nu\rho h_j^{(s)}(\sqrt{P_{j}^{(s)}}s_j+\sqrt{P_i^{(w)}}s_i)+\nu\rho h_j^{(s)}b+n_j,
\end{equation}
\begin{equation}
\label{y2}
y_i=\nu\rho h_i^{(w)}(\sqrt{P_{j}^{(s)}}s_j+\sqrt{P_i^{(w)}}s_i)+\nu \rho h_i^{(w)}b+n_i,
\end{equation}
where  $h_j^{(s)}$ and $ h_i^{(w)}$ are the channels of the $j$th strong user and the $i$th weak user, respectively, $\rho$ is the detector responsivity, and $n_j$ or $n_i$ are the noise components which can be modeled as real zero-mean additive white Gaussian noise variable (AWGN) with variance $\sigma^2 = N_vB_v$, where $N_v$ is the noise power spectral density (PSD) and $B_v$ is the modulation bandwidth. The received DC part $\rho  \nu h_j^{(s)} b$ at the strong user can be separated by a capacitor and directed to the energy harvesting circuit \cite{wang2015design}, while at the weak user, the DC part $\rho \nu h_i^{(w)}b$ can be removed using a capacitor. The strong user detects the weak user's message $s_i$ and then removes it or directs it to the weak user, while the weak user decodes his own signal $s_i$ by treating the strong user's signal $s_j$ as interference.  Hence, the achievable data rate of the strong  user signal can be approximated by  \cite{yin2016performance, zhang2017user, yang2017fair}
\begin{equation}
R_j^{(s)}(P_j^{(s)})=\frac{B_v}{2K}\log_2\left( 1+\frac{\nu^2\rho^2 h_j^{(s)2}P_j^{(s)}}{B_v N_v/K}\right).
\end{equation}
The achievable data rate of the weak user data decoded by the strong user is given by
\begin{equation}
\label{VLCws}
R_{j\rightarrow i}^{(w)}(P_i^{(w)},P_j^{(s)})=\frac{B_v}{2K}\log_2\left( 1+\frac{\nu^2\rho^2 h_j^{(s)2}P_i^{(w)}}{B_vN_v/K+\nu^2\rho^2 h_j^{(s)2}P_j^{(s)}}\right).
\end{equation}
The achievable data rate received at the weak user from the VLC direct link (DL) is given by
\begin{equation}
R_{i,DL}^{(w)}(P_i^{(w)},P_j^{(s)})=\frac{B_v}{2K}\log_2\left( 1+\frac{\nu^2\rho^2 h_i^{(w)2}P_i^{(w)}}{B_vN_v/K+\nu^2\rho^2 h_i^{(w)2}P_j^{(s)}}\right).
\end{equation}
\subsection{Transmission Scheme: Relayed Links}
The paper assumes that the strong user can work also as a relay that has the ability to harvest the energy from the light intensity, and then to utilize it to forward the decoded weak user's signal.
To harvest the energy, a capacitor separates the DC component from the received electrical signal and forwards it to the energy harvesting circuit \cite{wang2015design, obeed2018dc, simul2}. The received energy at the $j$th user can be expressed as \cite{solar}
\begin{equation}
\label{EH}
E_j=f \rho \nu V_t h_jb\ln(1+\frac{\rho h_j\nu b}{I_0}).
\end{equation}
where $V_t$ is the thermal voltage, $f$ is the fill factor,  and $I_0$ is the dark saturation current of the PD.
Suppose that the amplitude of the  transmitted signal  is $A$. The DC-bias and the signal amplitude $A+b$ must, therefore, be within the maximum and minimum input currents to make sure that the LED transmitter is operating in the linear region.  In other words, let $I_H$ and $I_L$ be the maximum and minimum limits of the input currents for the LEDs that guarantee a linear output optical power. The constraint $A\leq \min(b-I_L,I_H-b)$ must then be achieved. The deriving power to the LED is related to $A$ and $b$ by $P_{max}=(A-b)^2$. This means that the maximum allowed deriving power $P_{max}$ at the AP is a decreasing function of the DC-bias $b$. The DC-bias can be optimized to balance between the received harvested energy and the transmit power, but this is out of the scope of this paper. Hence, the DC-bias at the AP is assumed to be fixed and is given by $b=\frac{I_H+I_L}{2}$ which maximizes the total transmit power \cite{obeed2018dc}. Therefore, the maximum allowed deriving power is given by
\begin{equation}
P_{max}=(\frac{I_H-I_L}{2})^2.
\end{equation}
The strong user is assumed to be able to receive the light signal and transmit the RF signal at the same time. However, the energy storage device cannot charge and discharge at the same time (i.e., the receiver cannot harvest the energy and transmit the data at the same time). Hence, suppose that $T_1$  is the time spent to charge the battery and $T_2$ is the time used to discharge or re-transmit data through the RF link. Therefore, the RF re-transmission power is given by $P_{j,RF}=\frac{E_jT_1}{T_2}$. Under the assumption that $T_1=T_2$, the achievable data rate of the weak user that can be offered by the strong user $j$ through the RF link is given by
\begin{equation}
\label{RFj}
R_{i,j}^{RF}=\frac{B_{f,j}}{2}\log\left(1+\frac{G_{i,j}^{RF}P_{j,RF}}{B_{f,j}N_{RF}}\right),
\end{equation}
where $B_{f,j}$ is the RF modulation bandwidth at the user $j$ and $N_{RF}$ is the PSD of the RF signal. If the number of weak users that are served through relayed VLC/RF links is more than one, the RF bandwidth must be divided between weak users; otherwise, additional interference terms would arise. This paper assigns orthogonal RF bandwidth for each pair connected through the RF link equally,  so as to nullify the wireless interference among the active RF links. Suppose that the number of weak users that are served through RF links is $N_f$ and the total modulation bandwidth available for RF transmission is $B_f$, the modulation bandwidth at user $j$ is $B_{f,j}=\frac{B_f}{N_f}$.

From (\ref{RFj}) and (\ref{VLCws}), the achievable data rate at the weak user through the hybrid relayed link (RL) can be expressed as
\begin{equation}
R_{i,RL}^{(w)}(P_i^{(w)},P_j^{(s)})=\min\left(R_{i,j}^{RF},R_{j\rightarrow i}^{(w)}(P_i^{(w)},P_j^{(s)})\right).
\end{equation}

\section{Problem Formulation and Solutions}
\begin{table*}[!t]
\centering
\caption{Data rates notations}
\label{tableR}
\begin{tabular}{|p{.18\textwidth} | p{.36\textwidth} | p{.37\textwidth} |}
\hline
  \textbf{Symbol} &   \textbf{Definition} &   \textbf{Expression}\\
 \hline

  $R_j^{(s)}(P_j^{(s)})$ & The achievable data rate of the $j$th strong  user signal  &  $R_j^{(s)}(P_j^{(s)})=\frac{B_v}{2K}\log_2\left( 1+\frac{\nu^2\rho^2 h_j^{(s)2}P_j^{(s)}}{B_v N_v/K}\right)$ \\
  \hline
 $R_{j\rightarrow i}^{(w)}(P_i^{(w)},P_j^{(s)})$ & The achievable data rate of the weak user data decoded by the strong user & $R_{j\rightarrow i}^{(w)}(P_i^{(w)},P_j^{(s)})=\frac{B_v}{2K}\log_2\left( 1+\frac{\nu^2\rho^2 h_j^{(s)2}P_i^{(w)}}{B_vN_v/K+\nu^2\rho^2 h_j^{(s)2}P_j^{(s)}}\right)$\\
 \hline

$R_{i,DL}^{(w)}(P_i^{(w)},P_j^{(s)})$ & The achievable data rate received at the weak user from the VLC direct link (DL) & $R_{i,DL}^{(w)}(P_i^{(w)},P_j^{(s)})=\frac{B_v}{2K}\log_2\left( 1+\frac{\nu^2\rho^2 h_i^{(w)2}P_i^{(w)}}{B_vN_v/K+\nu^2\rho^2 h_i^{(w)2}P_j^{(s)}}\right),$\\
\hline
$R_{i,j}^{RF}(\mathbf{x})$ & The achievable data rate of the $i$th weak user offered by the $j$th strong user through the RF link   &$R_{i,j}^{RF}(\mathbf{x})=\frac{B_{f}}{2\sum_{x_i}^Kx_i}\log\left(1+\frac{ G_{i,j}^{RF}P_{j,RF}}{B_fN_{RF}/ \sum_{i}^Kx_i}\right).$ \\
\hline

$R_{i,RL}^{(w)}(P_i^{(w)},P_j^{(s)},\mathbf{x})$& The achievable data rate at the weak user through the hybrid relayed link (RL) &
$R_{i,RL}^{(w)}(P_i^{(w)},P_j^{(s)})=\min\left(R_{i,j}^{RF}(\mathbf{x}),R_{j\rightarrow i}^{(w)}(P_i^{(w)},P_j^{(s)})\right)$\\
\hline

$R_{i,j}(P_i^{(w)},P_j^{(s)},z_{i,j},\mathbf{x})$& The summation of the achievable data rate of the $i$th weak user and $j$th strong user & $R_{i,j}(P_i^{(w)},P_j^{(s)},z_{i,j},\mathbf{x})= z_{i,j}R_j^{(s)}(P_j^{(s)})
 +z_{i,j}(1-x_i)R_{i,DL}^{(w)}(P_i^{(w)},P_j^{(s)})
+z_{i,j}x_{i}R_{i,RL}^{(w)}(P_i^{(w)},P_j^{(s)},\mathbf{x})$\\

\hline
\end{tabular}
 \end{table*}

Our goal in this paper is to maximize the weighted sum of the achievable data rates under user connectivity and transmit power constraints. To formulate and tackle the problem, we should answer the three interlinked questions: 1) How should the users be paired? 2) What is the serving link of each weak user (i.e., direct VLC link or relayed VLC/RF link)? 3) How should the total power at the AP be allocated among users? This section formulates these questions as an optimization problem to jointly obtain the power allocation vector $\mathbf{P}$, the pairing index matrix $\mathbf{Z}$, and the link selection index vector $\mathbf{x}$ to maximize the network-wide weighted sum-rate. {{To simplify the paper presentation and avoid the confusion of the notations of the data rates, we introduce Table \ref{tableR} which defines the different achievable data rates, their symbols, and expressions. }}

Define the user pairing $K\times K$ matrix $\mathbf{Z}$, where the entries of $\mathbf{Z}$ are defined as follows:
\begin{equation}
z_{i,j}= \begin{cases} 1, & $if weak user $i$ is paired with strong user $j$,$ \\
0, & $otherwise$.
\end{cases}
\end{equation}
Define the link selection indicator vector $\mathbf{x}$, where the entries of $\mathbf{x}$ are defined as follows:
 \begin{equation}
 \label{xind}
x_{i}= \begin{cases} 1, & $if user $i$ is served through the hybrid RF/VLC link,$ \\
0, & $if user $i$ is served through the direct VLC link.$
\end{cases}
\end{equation}
From (\ref{xind}), the number of weak users that are connected through RF link is given by $N_f=\sum_{i=1}^Kx_i$. The rate in (\ref{RFj}) becomes, therefore, a function of $\mathbf{x}$ as shown in Table \ref{tableR}.
The summation of the achievable data rate of the $i$th weak user and the $j$th strong user is given by
\begin{multline}
\label{Rij}
\resizebox{0.91\hsize}{!}{$R_{i,j}(P_i^{(w)},P_j^{(s)},z_{i,j},\mathbf{x})=
 z_{i,j}R_j^{(s)}(P_j^{(s)})
 +z_{i,j}(1-x_i)R_{i,DL}^{(w)}(P_i^{(w)},P_j^{(s)})
+z_{i,j}x_{i}R_{i,RL}^{(w)}(P_i^{(w)},P_j^{(s)},\mathbf{x})$}.
\end{multline}
The weighted sum-rate of the system users is given by
\begin{multline}
\label{Rij2}
\sum_{i=1}^K\sum_{j=1}^K R_{i,j}(P_i^{(w)},P_j^{(s)},z_{i,j},x_{i},w_i^{(w)},w_j^{(s)})=
\sum_{i=1}^K\sum_{j=1}^K w_j^{(s)} z_{i,j}R_j^{(s)}(P_j^{(s)})\\
+w_iz_{i,j}(1-x_i)R_{i,DL}^{(w)}(P_i^{(w)},P_j^{(s)})
+w_i^{(w)}z_{i,j}x_{i}\min\left(R_{i,j}^{RF}(\mathbf{x}),R_{j\rightarrow i}^{(w)}(P_i^{(w)},P_j^{(s)})\right),
\end{multline}
where the weights $w_i^{(w)}$ and $w_j^{(s)}$ ($\forall \ i,j$) are imposed to balance between the system sum-rate and the system fairness. 
Based on expression (\ref{Rij2}), the considered optimization problem can now be expressed mathematically as follows:
\begin{subequations}
\label{EHM}
\begin{eqnarray}
&\displaystyle\max_{\mathbf{P}, \mathbf{z}, \mathbf{x}}&  \sum_{i=1}^K\sum_{j=1}^K R_{i,j}(P_i^{(w)},P_j^{(s)},z_{i,j},x_{i,j},w_i^{(w)},w_j^{(s)})\\
\label{EHMb}
&s.t.&   \sum_{i=1}^K\sum_{j=1}^Kz_{i,j}(P_i^{(w)}+P_j^{(s)})\leq P_{max}\\
\label{EHMc}
&&   \sum_{j=1}^K z_{i,j}=1,\ \forall i,\ 
 \sum_{i=1}^K z_{i,j}=1,\ \forall j,\ z_{i,j}\in \lbrace 0,1 \rbrace,\ \forall i,j\\
\label{EHMf}
&& x_{i}\in \lbrace 0,1  \rbrace, \ \forall i\\
\label{EHMg}
&& 0\leq P_j^{(s)}\leq P_i^{(w)} \ \forall i,j, 
\end{eqnarray}
\end{subequations}
\noindent where the optimization is over the continuous power variable $\mathbf{P}$, and the discrete association variables $\mathbf{z}$ and $\mathbf{x}$. Constraint (\ref{EHMb}) is imposed for the maximum transmit power.
Constraints in (\ref{EHMc}) guarantee that each strong user is paired only with one weak user. Constraint (\ref{EHMf}) imposes that the weak user receives the information either from the direct VLC link or from the relayed hybrid VLC/RF link. Constraint (\ref{EHMg}) is the power constraint that is imposed for successive interference cancellation in NOMA system.

The problem in (\ref{EHM}) is a challenging mixed non-convex combinatorial optimization problem.  
The paper tackles such a difficult problem through a heuristic approach. The main idea is to solve the problem for each parameter when the other parameters are fixed. Specifically, for the power allocation problem, we find closed-form solutions for the formulated non-convex optimization problem. For the user pairing problem, we use the Hungarian method for strong-to-weak users assignment problem. For the link selection problem, we find the optimal solution by first generating a $K\times K$ matrix that reduces the number of candidate vectors from $2^K$ to $K$. The overall algorithm then iterates among the above three steps so as to jointly determine the user pairing, link selection, and power allocation. Although such solution does not lead to the optimal solution of problem (\ref{EHM}), the simulations section of the paper illustrates how the proposed solution notably improves the performance of VLC network in terms of sum-rate and fairness.

\subsection{Power Allocation}
\label{PAWS}

This section solves the optimization problem (\ref{EHM}) when the $\mathbf{Z}$ matrix and $\mathbf{x}$ vector are fixed. In particular, we find the power allocation for fixed user pairing and link selection. Under the given $\mathbf{Z}$ and $\mathbf{x}$, the weighted sum maximization problem can be formulated as follows
\begin{subequations}
\label{EHM2}
\begin{eqnarray}
&\displaystyle\max_{\mathbf{P}}&  \sum_{i=1}^K\sum_{j=1}^K R_{i,j}(P_i^{(w)},P_j^{(s)},z_{i,j},x_{i,j},w_i^{(w)},w_j^{(s)})\\
\label{EHM2b}
&s.t.&   \sum_{i=1}^K\sum_{j=1}^Kz_{i,j}(P_i^{(w)}+P_j^{(s)})\leq P_{max}\\
\label{EHM2c}
&& 0\leq P_j^{(w)}\leq P_i^{(s)} \ \forall i,j.
\end{eqnarray}
\end{subequations}

Clearly, problem (\ref{EHM2}) is non-convex, since the objective function is not concave.  However, in the following, we provide closed-form solutions for the power allocation. Define a variable $q_{i,j}$ as the power budget of a pair consisting of the $i$th weak user and $j$th strong user, i.e., $q_{i,j}=z_{i,j}(P_i^{(w)}+P_j^{(s)})$. Then, we can solve problem (\ref{EHM2}) by first breaking the problem into into $K$ sub-problems to find $P_i^{(w)}$
and $P_j^{(s)},\ (\forall i, j),$ the solution of which depends on one
main problem which solves the power budgets $q_{i,j}, \forall  i, j$.
Hence, for the $k$th problem, suppose that the weak user $i$ is paired with the strong user $j$ (i.e., $z_{i,j}=1$). We then distinguish between two cases, i.e., either the weak user is served by the hybrid VLC/RF link (called Case 1 in the rest of the paper), or the weak user is served by the VLC link (called Case 2 in the rest of the paper).

\subsubsection{Case 1} The users $i$ and $j$ are paired, and the weak user $i$ is served through the hybrid VLC/RF link. For a given $q_{i,j}$, the problem of finding $P_i^{(w)}$
and $P_j^{(s)}$ in this case can be formulated as follows
\begin{subequations}
\label{EHM3}
\begin{eqnarray}
&\displaystyle\max_{P_i^{(w)},P_j^{(s)}}&
 w_j^{(s)}R_j^{(s)}(P_j^{(s)})+w_i^{(w)} \min(R_{i,j}^{RF}(\mathbf{x}),R_{j\rightarrow i}^{(w)}(P_i^{(w)},P_j^{(s)}))\\
\label{EHM3b}
&s.t.&   P_i^{(w)}+P_j^{(s)}= q_{i,j}\\
\label{EHM3c}
&& 0\leq P_j^{(s)}\leq P_i^{(w)}.
\end{eqnarray}
\end{subequations}

To solve problem (\ref{EHM3}), note first that the problem above must be carefully treated because of the $\min$ term in the objective function. Also, for a fixed $\mathbf{x}$, observe that the function $R_{i,j}^{RF}(\mathbf{x})$ is fixed, as it is not a function of the power variables $P_i^{(w)}$ and $P_j^{(s)}$.

\begin{remark}
\label{rem1}
\emph{In problem (\ref{EHM3}), the resulting optimal value of $R_{j\rightarrow i}^{(w)}(P_i^{(w)},P_j^{(s)})$ must be less than or equal to the resulting optimal value of $R_{i,j}^{RF}(\mathbf{x})$. To prove that, we can see that increasing $P_i^{(w)}$ (decreasing $P_j^{(s)}$) increases $R_{j\rightarrow i}^{(w)}(P_i^{(w)},P_j^{(s)})$ and decreases the achievable data rate of the strong user at the same time. This means that increasing $P_i^{(w)}$ to have $R_{j\rightarrow i}^{(w)}(P_i^{(w)},P_j^{(s)})$ larger than $R_{i,j}^{RF}(\mathbf{x})$ would fix the data rate of the weak user at $R_{i,j}^{RF}(\mathbf{x})$ and would decrease the data rate of the strong user. In other words, if $\hat{P}_i$ is the value that would make $R_{i,j}^{RF}(\mathbf{x})=R_{j\rightarrow i}^{(w)}(\hat{P}_i,P_j^{(s)})$, the optimal $P_i^{(w)}$ must be less than or equal to $\hat{P}_i$.}
\end{remark}


To solve problem (\ref{EHM3}), we first set that
$ \min(R_{i,j}^{RF}(\mathbf{x}),R_{j\rightarrow i}^{(w)}(P_i^{(w)},P_j^{(s)}))=R_{j\rightarrow i}^{(w)}(P_i^{(w)},P_j^{(s)})$  and solve the problem. If the resulting $R_{j\rightarrow i}^{(w)}(P_i^{(w)},P_j^{(s)})$ satisfies the inequality $R_{j\rightarrow i}^{(w)}(P_i^{(w)},P_j^{(s)}) \leq R_{i,j}^{RF}(\mathbf{x})$, then the resulting powers are the optimal solution. Otherwise (i.e., if $R_{j\rightarrow i}^{(w)}(P_i^{(w)},P_j^{(s)}) > R_{i,j}^{RF}(\mathbf{x})$), the optimal solution is the value of the power $P_i^{(w)}$ that achieves that $R_{j\rightarrow i}^{(w)}(P_i^{(w)},P_j^{(s)}) = R_{i,j}^{RF}(\mathbf{x})$  and $P_j^{(s)}=q_{i,j}-P_i^{(w)}$.


\begin{proposition}
\label{prop1}
In problem (\ref{EHM3}), if we replace the term $ \min(R_{i,j}^{RF}(\mathbf{x}),R_{j\rightarrow i}^{(w)}(P_i^{(w)},P_j^{(s)}))$ by $R_{j\rightarrow i}^{(w)}(P_i^{(w)},P_j^{(s)}))$, the optimal power allocations are given by $P_j^{(s)}=\eta_{i,j,1}$,  where
\begin{equation}
\label{eta_1}
\eta_{i,j,1}=\frac{-1+\sqrt{1+q_{i,j}\Psi_j^{(s)} }}{\Psi_j^{(s)}},
\end{equation}
and
\begin{equation}
\label{P_i}
P_i^{(w)}=q_{i,j}-\eta_{i,j,1},
\end{equation}
where $\Psi_j^{(s)}=\frac{\rho^2 h_j^{(s)2}}{B_vN_v/K}$ and $\Psi_i^{(w)}=\frac{\rho^2 h_i^{(w)2}}{B_vN_v/K}$.
\end{proposition}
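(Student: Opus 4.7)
\emph{Proof plan.} The strategy is to use the equality constraint (\ref{EHM3b}) to eliminate $P_i^{(w)} = q_{i,j} - P_j^{(s)}$, reducing the modified version of (\ref{EHM3}) (where the $\min$ is replaced by $R_{j\rightarrow i}^{(w)}$) to a one-dimensional problem in $p := P_j^{(s)}$ on the interval $[0, q_{i,j}/2]$ determined by (\ref{EHM3c}). The central algebraic observation I would exploit is that, because both $R_j^{(s)}$ and $R_{j\rightarrow i}^{(w)}$ are written in terms of the same strong-user channel coefficient $\Psi_j^{(s)}$, the two logarithms combine along the feasible set into
\begin{equation*}
R_j^{(s)}(p) + R_{j\rightarrow i}^{(w)}(q_{i,j} - p, p) \;=\; \tfrac{B_v}{2K}\log_2\!\left(1 + \Psi_j^{(s)} q_{i,j}\right),
\end{equation*}
which is independent of $p$. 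This telescoping identity is the main leverage for what follows.

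Next, I would form the Lagrangian for the reduced problem, attaching multipliers to (\ref{EHM3b}) and to the box constraint (\ref{EHM3c}), and impose first-order stationarity in $p$. After routine differentiation of the two logarithmic terms, the KKT condition rearranges to the quadratic $\Psi_j^{(s)} p^2 + 2p - q_{i,j} = 0$, equivalently
\begin{equation*}
(1 + \Psi_j^{(s)} p)^2 = 1 + \Psi_j^{(s)} q_{i,j}.
\end{equation*}
Its unique nonnegative root is precisely $\eta_{i,j,1}$ as in (\ref{eta_1}), and by the power equality the companion assignment is $P_i^{(w)} = q_{i,j} - \eta_{i,j,1}$ as in (\ref{P_i}). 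A short feasibility check then shows $0 \leq \eta_{i,j,1} \leq q_{i,j}/2$: the lower bound is immediate from $\sqrt{1 + \Psi_j^{(s)} q_{i,j}} \geq 1$, and the upper bound follows from $(1 + \tfrac{1}{2}\Psi_j^{(s)} q_{i,j})^2 \geq 1 + \Psi_j^{(s)} q_{i,j}$, so the inequality constraints (\ref{EHM3c}) are inactive at $\eta_{i,j,1}$.

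I anticipate the main obstacle to be the non-concavity of the joint objective in $(P_i^{(w)}, P_j^{(s)})$, which precludes any direct concavity-based guarantee that the KKT stationary point is a global maximizer. The telescoping identity above circumvents this by reducing the problem to a scalar one, whose global optimum can then be certified by comparing the objective at $p = \eta_{i,j,1}$ against the two boundary points $p = 0$ and $p = q_{i,j}/2$, or equivalently by a one-dimensional second-order sign check on the reduced objective. I would also record that $\eta_{i,j,1}$ is exactly the power split that equates $R_j^{(s)}(p)$ with $R_{j\rightarrow i}^{(w)}(q_{i,j} - p, p)$, an interpretation that is consistent with Remark~\ref{rem1} and will be useful when one subsequently checks the condition $R_{j\rightarrow i}^{(w)} \leq R_{i,j}^{RF}(\mathbf{x})$ that governs whether the $\min$ binds in the original problem.
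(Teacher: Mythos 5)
Your telescoping identity is exactly the paper's key observation: since both $R_j^{(s)}$ and $R_{j\rightarrow i}^{(w)}$ involve only the strong user's channel, the unweighted sum along $P_i^{(w)}+P_j^{(s)}=q_{i,j}$ collapses to $\tfrac{B_v}{2K}\log_2(1+\Psi_j^{(s)}q_{i,j})$, independent of the split. But the step where you claim that first-order stationarity of the reduced Lagrangian ``rearranges to the quadratic $\Psi_j^{(s)}p^2+2p-q_{i,j}=0$'' is wrong, and it is precisely the telescoping identity that kills it. Differentiating the reduced weighted objective in $p$ gives
\begin{equation*}
f'(p)=\frac{B_v\,\Psi_j^{(s)}\bigl(w_j^{(s)}-w_i^{(w)}\bigr)}{2K\ln 2\,\bigl(1+\Psi_j^{(s)}p\bigr)},
\end{equation*}
which never vanishes in the interior unless $w_j^{(s)}=w_i^{(w)}$, in which case it vanishes identically. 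There is no interior stationary point that singles out $\eta_{i,j,1}$; the quadratic you wrote is not a KKT condition but the rate-equalization condition $R_j^{(s)}(p)=R_{j\rightarrow i}^{(w)}(q_{i,j}-p,p)$, which you relegate to a closing ``interpretation'' when it is in fact the actual derivation. The paper's argument is: the sum-rate is invariant to the split, so the weighted objective is monotone in $p$ with the sign of $w_j^{(s)}-w_i^{(w)}$; setting $w_i^{(w)}=w_j^{(s)}$ makes every feasible split optimal, and among these ties one selects the split that equalizes the two rates (maximum fairness at zero sum-rate cost), which yields the quadratic and hence $\eta_{i,j,1}$.

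Your proposed certification step would also backfire: if $w_j^{(s)}\neq w_i^{(w)}$, comparing $f$ at $p=\eta_{i,j,1}$ against the boundary points $p=0$ and $p=q_{i,j}/2$ shows that a boundary point strictly wins, so $\eta_{i,j,1}$ is not the maximizer of (\ref{EHM3}) as literally stated --- the proposition is only valid under the implicit equal-weights convention that the paper adopts for this case (and flags when it warns that $w_i^{(w)}$ and $w_j^{(s)}$ ``must be selected carefully''). Your feasibility check $0\leq\eta_{i,j,1}\leq q_{i,j}/2$ is correct and is a detail the paper omits, but to repair the proof you must replace the stationarity argument with the monotonicity/tie-breaking argument above.
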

\begin{proof} See Appendix \ref{AppA}.
\end{proof}

 Now, if equation (\ref{eta_1}) and (\ref{P_i}) achieve that $ \min(R_{i,j}^{RF}(\mathbf{x}),R_{j\rightarrow i}^{(w)}(P_i^{(w)},P_j^{(s)}))=R_{j\rightarrow i}^{(w)}(P_i^{(w)},P_j^{(s)}))$, Proposition \ref{prop1} provides the optimal solution for (\ref{EHM3}); otherwise, the values of $P_j^{(s)}$ and $P_i^{(w)}$ must be modified to comply with Remark \ref{rem1} to satisfy  $R_{j\rightarrow i}^{(w)}(P_i^{(w)},P_j^{(s)}) = R_{i,j}^{RF}(\mathbf{x})$. By solving this equation ($R_{j\rightarrow i}^{(w)}(P_i^{(w)},P_j^{(s)}) = R_{i,j}^{RF}(\mathbf{x})$), we  obtain $P_j^{(s)}=\eta_{i,j,2}$, where $\eta_{i,j,2}$ is given by
\begin{equation}
\label{eta2}
\eta_{i,j,2}=\frac{q_{i,j}\Psi_j^{(s)}+1-A}{A\Psi_j^{(s)}},
\end{equation}
where $A=2^{2R_{i,j}^{RF}(\mathbf{x})K/B_v}$ and $P_i^{(w)}$ can be given by $P_i^{(w)}=q_{i,j}-\eta_{i,j,2}$. We should note that the values of both $w_i^{(w)}$ and $w_j^{(s)}$ must be selected carefully because, as we  show later, the budget $q_{i,j}$ is a function of both weights $w_i^{(w)}$ and $w_j^{(s)}$.

\subsubsection{Case 2} The users $i$ and $j$ are paired, and the weak user $i$ is served through the direct VLC link. The problem in this case can be formulated as

\begin{subequations}
\label{EHM4}
\begin{eqnarray}
&\displaystyle\max_{P_i^{(w)},P_j^{(s)}}&
 w_j^{(s)}R_j^{(s)}(P_j^{(s)})+w_i^{(w)} R_{i,DL}^{(w)}(P_i^{(w)},P_j^{(s)})\\
\label{EHM4b}
&s.t.&   P_i^{(w)}+P_j^{(s)}= q_{i,j}\\
\label{EHM4c}
&& 0\leq P_j^{(s)}\leq P_i^{(w)}.
\end{eqnarray}
\end{subequations}
Because of the interference term in $R_{i,DL}^{(w)}(P_i^{(w)},P_j^{(s)})$, the optimization problem (\ref{EHM4}) is still nonconvex. However,  the authors of \cite{Zhu2017} tackle a  problem of similar structure and show that the optimal solution has a closed-form.
By setting the derivative of the objective function of (\ref{EHM4}) equal to zero, we obtain
\begin{multline}
\label{D1}
\resizebox{0.92\hsize}{!}{$\frac{d}{dP_j^{(s)}}\big[w_j^{(s)}R_j^{(s)}(P_j^{(s)})+w_i^{(w)} R_{i,DL}^{(w)}(P_i^{(w)},P_j^{(s)})\big]=
\frac{w_j^{(s)}B_v}{2K(1/\Psi_j^{(s)}+P_j^{(s)})}-\frac{w_i^{(w)}B_v}{2K(1/\Psi_i^{(w)}+P_j^{(s)})}=0$}
\end{multline}
 One can readily verify that the second derivative of the objective function is always negative if $\Psi_j^{(s)}\geq \Psi_i^{(w)}$ and $w_i^{(w)}/w_j^{(s)}<\Psi_j^{(s)}/\Psi_i^{(w)}$, i.e., the objective function is concave in such case. Therefore, the optimal solution can be obtained by solving equation (\ref{D1}), which leads to a unique root $P_j^{(s)}=\Omega_{i,j}$, where $\Omega_{i,j}$ is given by
\begin{equation}
\label{omga}
\Omega_{i,j}=\frac{w_i\Psi_i^{(w)}-w_j^{(s)}\Psi_j^{(s)}}{\Psi_j^{(s)}\Psi_i^{(w)}(w_j^{(s)}-w_i^{(w)})},
\end{equation}
where the conditions that $w_i^{(w)}/w_j^{(s)}<\Psi_j^{(s)}/\Psi_i^{(w)}$ and $q_{i,j}>2\Omega_{i,j}$ must be satisfied, and $P_i^{(w)}=q_{i,j}-\Omega_{i,j}$.

\subsubsection{Determination of $q_{i,j}$}
The previous analysis of both Case 1 and Case 2 allows to determine the powers $P_j^{(s)}$ and $P_i^{(w)}$ as a function of $q_{i,j}$, i.e., equations (\ref{eta_1}), (\ref{eta2}), and (\ref{omga}). By substituting the corresponding expressions of $P_j^{(s)}$ and $P_i^{(w)}$ in problem (\ref{EHM2}), one can formulate the following optimization problem:
\begin{subequations}
\label{EHM5}
\begin{eqnarray}
\nonumber
&\max_{q_{i,j}}&  \sum_{i=1}^K\sum_{j=1}^K x_iz_{i,j}\big(w_{j}^{(s)}F_j^{(s)}(q_{i,j})\\
\nonumber
&& +w_{i}^{(w)}F_{i}^{(w)}(q_{i,j})\big)+\sum_{i=1}^K\sum_{j=1}^K (1-x_i)z_{i,j}w_i^{(w)} \frac{B_v}{2K}\log_2(1+\Omega_{i,j}\Psi_j^{(s)})\\
&&+\sum_{i=1}^K\sum_{j=1}^K (1-x_i)z_{i,j}w_i^{(w)} \frac{B_v}{2K}\log_2\left(\frac{q_{i,j}\Psi_i^{(w)}+1}{\Omega_{i,j}\Psi_i^{(w)}+1}\right)\\
\label{EHM5b}
&s.t.&   \sum_{i=1}^K\sum_{j=1}^K q_{i,j}= P_{max},\\
&& q_{i,j}\geq 0, \ \forall i,j,
\end{eqnarray}
\end{subequations}
where  $F_j^{(s)}(q_{i,j})$ and $F_i^{(w)}(q_{i,j})$ can be written either as
\begin{equation}
 \label{Fj1}
 F_j^{(s)}(q_{i,j})=R_j^{(s)}(\eta_{i,j,1})=\frac{B_v}{2K}\log_2(\sqrt{\Psi_j^{(s)}q_{i,j}+1})
 \end{equation} and
 \begin{equation}\label{Fi1}
 F_i^{(w)}(q_{i,j})=R_{i,RL}^{(w)}(\eta_{i,j,1})=\frac{B_v}{2K}\log_2(\sqrt{\Psi_j^{(s)}q_{i,j}+1})
 \end{equation}
  or as
 \begin{equation}\label{Fj2}
 F_j^{(s)}(q_{i,j})=R_j^{(s)}(\eta_{i,j,2})=\frac{B_v}{2K}\log_2(\frac{\Psi_j^{(s)}q_{i,j}+1}{A})\end{equation} and
 \begin{equation}\label{Fi2}
 F_i^{(w)}(q_{i,j})=R_{i,RL}^{(w)}(\eta_{i,j,2})=R_{i,j}^{RF}(\mathbf{x}).
 \end{equation}
 At this stage, we cannot decide what are the exact expressions of the functions $F_j^{(s)}(q_{i,j})$ and $F_i^{(w)}(q_{i,j}),\ \forall i,j$, because both functions depend on  whether the optimal power allocation of the pair is given by (\ref{eta_1}) or by (\ref{eta2}). Since $q_{i,j}$ is still unknown, we next derive a closed-form waterfilling-like solution for $q_{i,j}$ (if $x_i=1$), for both possible expressions of  $F_j^{(s)}(q_{i,j})$ and $F_i^{(w)}(q_{i,j})$.
\begin{proposition}\label{prop2}
The optimal solution of problem (\ref{EHM5}) is expressed as
\begin{equation}\label{qij1}
q_{i,j}=\left[\frac{wB_v}{2K\lambda}-\frac{1}{\Psi_j^{(s)}}\right]^+,
\end{equation}
where $w=w_i^{(w)}$ if $x_i=0$, $w=w_j^{(s)}$ if $x_i=1$, $[n]^+$ means $\max(0,n)$, and $\lambda$ is the dual variable related to the total transmitting power constraint (\ref{EHM5b}) and can be found by substituting (\ref{qij1}) in constraint (\ref{EHM5b}).
\end{proposition}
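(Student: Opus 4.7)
The plan is to exploit the separability of problem (\ref{EHM5}): the objective is a sum of per-pair utilities $G_{i,j}(q_{i,j})$, and the only coupling between pairs is the scalar total-power constraint (\ref{EHM5b}). I would attach a single Lagrange multiplier $\lambda \geq 0$ to that constraint, keep the nonnegativity constraints $q_{i,j}\geq 0$ implicit, and form the (partial) Lagrangian
\begin{equation*}
\mathcal{L}(\mathbf{q},\lambda) = \sum_{i,j} G_{i,j}(q_{i,j}) - \lambda\Big(\sum_{i,j} q_{i,j} - P_{max}\Big).
\end{equation*}
Because $\mathcal{L}$ is separable, its stationary point reduces to $K^2$ independent scalar equations $\partial G_{i,j}/\partial q_{i,j} = \lambda$, with complementary slackness generating the $[\,\cdot\,]^+$ projection whenever $q_{i,j}=0$.

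Next, I would compute $\partial G_{i,j}/\partial q_{i,j}$ explicitly in each regime. When $x_i = 0$ (direct VLC), the quantity $\Omega_{i,j}$ of (\ref{omga}) depends only on the weights and channel gains and not on $q_{i,j}$, so the only term in (\ref{EHM5}) that varies with $q_{i,j}$ is the ``$\log_2(q_{i,j}\Psi_i^{(w)}+1)$'' contribution; differentiating it yields a marginal of the form (const)$\cdot w_i^{(w)}/(1/\Psi_i^{(w)} + q_{i,j})$. When $x_i = 1$ (relayed hybrid), I would handle the two sub-branches (\ref{Fj1})--(\ref{Fi1}) and (\ref{Fj2})--(\ref{Fi2}) separately: in the first, $F_j^{(s)}$ and $F_i^{(w)}$ both equal $\tfrac{B_v}{4K}\log_2(\Psi_j^{(s)}q_{i,j}+1)$; in the second, $F_i^{(w)}$ collapses to the constant $R_{i,j}^{RF}(\mathbf{x})$ and only $F_j^{(s)}$ contributes. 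In either sub-branch the marginal utility takes the form (const)$\cdot w/(1/\Psi_j^{(s)} + q_{i,j})$, so the KKT stationarity condition algebraically produces exactly the waterfilling expression (\ref{qij1}), with the effective weight $w$ absorbed into the numerator and the channel-inverse base level $1/\Psi_j^{(s)}$ appearing in the denominator. The multiplier $\lambda$ is then pinned down by substituting (\ref{qij1}) into $\sum_{i,j} q_{i,j} = P_{max}$, which reduces to a scalar equation that is monotone in $\lambda$ (each $q_{i,j}$ is strictly decreasing in $\lambda$) and hence solvable by bisection.

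The principal obstacle, in my view, is the piecewise nature of $F_j^{(s)}$ and $F_i^{(w)}$ in Case 1, since the active sub-branch depends on whether the VLC-side decoding rate saturates against the RF bottleneck -- a condition flagged by Remark \ref{rem1}. I would resolve this by verifying that both sub-branches produce marginals of the same functional shape in $q_{i,j}$ (strictly decreasing with a common dependence on $\Psi_j^{(s)}$), so that the KKT equation uniquely determines $q_{i,j}$ for any fixed $\lambda$. One then checks post hoc that the branch selected by the $\min$ in (\ref{EHM3}) is self-consistent with the resulting $q_{i,j}$, i.e., the waterfilling allocation does not push the pair across the saturation boundary in a way that would violate the optimality of Proposition \ref{prop1}; if such a crossing is detected, the allocation is refined by replacing $\eta_{i,j,1}$ with $\eta_{i,j,2}$ from (\ref{eta2}) before re-solving for $\lambda$.
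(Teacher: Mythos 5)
Your proposal follows essentially the same route as the paper's Appendix~B proof: form the Lagrangian of the separable convex problem (\ref{EHM5}), apply per-pair KKT stationarity in each of the three cases ($z_{i,j}=0$; $z_{i,j}=1$, $x_i=1$ with its two sub-branches (\ref{Fj1})--(\ref{Fi1}) and (\ref{Fj2})--(\ref{Fi2}); $z_{i,j}=1$, $x_i=0$), and read off the waterfilling form with $\lambda$ pinned down by the sum-power constraint. The only differences are cosmetic --- you make explicit the monotone-in-$\lambda$ bisection and the post hoc branch-consistency check that the paper leaves implicit in Algorithm~1 --- though note that your $x_i=0$ marginal correctly yields a base level $1/\Psi_i^{(w)}$ rather than the $1/\Psi_j^{(s)}$ stated in (\ref{qij1}), a discrepancy the paper's own derivation (\ref{xe2})--(\ref{qij3}) shares.
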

\begin{proof} See Appendix \ref{AppB}.
\end{proof}


The steps of the power allocation algorithm are presented in Algorithm 1 below, which explains how to best allocate the powers of the users (for fixed users’ pairing and link selection).
\begin{algorithm}
 \caption{Allocating the power for users (under given users' pairing and link selection) }
 \label{G1}
 \begin{enumerate}
 \item Find $\lambda$ by substituting (\ref{qij1}) in constraint (\ref{EHM5b}).
 \item For each pair (when $z_{i,j}=1$), if the corresponding $x_i=0$, find $q_{i,j}$ using (\ref{qij1}) ($w=w_i^{(w)}$) and $P_j^{(s)}=\Omega_{i,j}$.
\item If the corresponding $x_i=1$, find $q_{i,j}$ using (\ref{qij1}) ($w=w_j^{(w)}$) and find $F_i^{(w)}(q_{i,j})$ using (\ref{Fi2}). If $F_i^{(w)}(q_{i,j})<=R_{i,j}^{(x)}$, the optimal $P_j^{(s)}$ is given by $\eta_{i,j,1}$; otherwise, the optimal $P_j^{(s)}$ is given by $\eta_{i,j,2}$, and the optimal $P_i^{(w)}=q_{i,j}-P_j^{(s)}$.
 \end{enumerate}
 \end{algorithm}
%

\subsection{Updating the Weights} To best balance the users rates across the network, this paper chooses to update the weights using the classical proportional fairness approach, e.g., see \cite{Yu2011} and references therein. More specifically,  we update the weights in an outer loop by setting  $w_i^{(w)}=\frac{1}{\bar{R_i}^{(w)}}$ and $w_j^{(s)}=\frac{1}{\bar{R_j}^{(s)}}$, where $\bar{R_i}^{(w)}$ and $\bar{R_j}^{(s)}$  are the long term average rates of the $i$th weak user and $j$th strong user, respectively \cite{Yu2011}. Consider the paired users $i$ and $j$. If $x_i=0$, the weights are updated in such a way that satisfies the condition:  $w_i^{(w)}/w_j^{(s)}<\Psi_j^{(s)}/\Psi_i^{(w)}$, i.e., to guarantee a positive power value for the strong user. Hence, through the outer loop iterations, if $\bar{R}_j$ is dropped below $\bar{R}_i$, we select $w_j^{(s)}=\alpha w_i^{(w)}$, where $\alpha\lessapprox 1$, i.e., $\alpha$ is strictly less than 1 (yet sufficiently close to 1).
\subsection{User Pairing Optimization}
 Section \ref{PAWS} determines the power allocation for given user pairing and link selection. In this section, we provide how to pair the users under a given fixed power allocation and link selection. For the given fixed power allocation and fixed link selection, the optimization problem (\ref{EHM}) can be formulated as 
\begin{subequations}
\label{EHM6}
\begin{eqnarray}
&\displaystyle\max_{\mathbf{Z}}&   \sum_{i=1}^K\sum_{j=1}^K w_j^{(s)} z_{i,j}R_j^{(s)}(P_j^{(s)})+w_i^{(w)}z_{i,j}(1-x_i)R_{i,DL}^{(w)}(P_i^{(w)},P_j^{(s)})\\
\nonumber
&&+w_i^{(w)}z_{i,j}x_{i}\min\left(R_{i,j}^{RF}(\mathbf{x}),R_{j\rightarrow i}^{(w)}(P_i^{(w)},P_j^{(s)})\right)\\
\label{EHM6b}
&s.t.&   \sum_{j=1}^K z_{i,j}=1,\ \forall i,\\
\label{EHM6d}
&& \sum_{i=1}^K z_{i,j}=1,\ \forall j,\\
\label{EHM6e}
&& z_{i,j}\in \lbrace 0,1 \rbrace,\ \forall i,j
\end{eqnarray}
\end{subequations}
The above optimization problem is a integer linear programming, which simply pairs each of the weak users to one (and only one) strong user based on the utility values. Such problem is considered as a one-to-one linear assignment problem \cite{Kuhn2005}, where it can be solved using one of the conventional matching algorithms, e.g., the Hungarian method \cite{Kuhn2005}.
\subsection{Link Selection Optimization} This section now focuses on the link selection problem for fixed power allocation and user pairing, which are determined in the previous subsections. The problem of selecting the optimal links for the weak users, i.e., either to be served by the VLC AP, or via the strong user which relays the information from the VLC AP to the weak user through the hybrid VLC/RF links (even for  given power allocation and user pairing) is not easy to be tackled. This paper addresses the problem by first generating a $K\times K$ matrix that reduces the number of candidates vectors $\mathbf{x}$  from $2^K$ to $K$ vectors (for a fixed user pairing and power allocation). In particular, define a matrix $S$, where the first row in $S$ hosts  the rates of the weak users coming from the relayed links subtracted form the rates coming from the direct link when $\sum_{i=1}^Kx_i=1$. Similarly, the second row in $S$ hosts  the rates of the weak users coming from the relayed links subtracted form the rates coming from the direct link when $\sum_{i=1}^Kx_i=2$. Construct all other rows of the matrix $S$ in a similar fashion, i.e., the $K$th row hosts  the rates of the weak users coming from the relayed links subtracted form the rates coming from the direct link, where $\sum_{i=1}^Kx_i=K$.

Select afterwards the vector $\mathbf{x}$ corresponding to the highest values of each row. More specifically, consider row $k$ of matrix $S$. The highest $k$ values of the row are then set to 1; the other entries are set to 0.   This results in having $K$ different $\mathbf{x}$ vectors, out of which we select the one that maximizes the objective weighted sum function. Such method has a polynomial computational complexity ($O(K\times K+K)$), which is significantly simpler than the complexity of the exhaustive search, i.e., $O(2^K)$. In addition, the proposed approach guarantees the optimal solution because it finds the optimal $\mathbf{x}$ vector for each case of $\sum_{i=1}^Kx_i$ and tries all of them to select the maximizing one.

\subsection{Overall Algorithm}

Now that each of the continuous and discrete variables of problem (\ref{EHM}) are determined separately, the paper proposes solving (\ref{EHM}) using an iterative algorithm that finds the three variables in an alternative way. In particular, we first initialize the user pairing and link selection, and find the powers using the proposed closed-form solutions based on the given user pairing and link selection initial values. Afterwards, we find the $\mathbf{Z}$ matrix based on the found allocated power and link selection. The link selection vector is finally updated to maximize the weighted sum objective function. These steps are repeated until convergence. The steps of the overall solution are given in Algorithm \ref{Algor2} Table.

\begin{algorithm}
 \caption{Overall algorithm for joint power allocation, user pairing, and link selection}
 \label{Algor2}
 \begin{enumerate}
 \item Give $\mathbf{Z}$ and $\mathbf{x}$ initial values.
 \item Repeat.
 \item Implement Algorithm \ref{G1} to allocate the powers.
 \item Solve problem (\ref{EHM6}) using Hungarian method to update $\mathbf{Z}$ with the given allocated power.
 \item For the updated powers and $\mathbf{Z}$, generate the matrix $S$ and determine the optimal link selection vector.
 \item Stop once there is no improvement in the objective function or the maximum number of the iterations is reached.
 \end{enumerate}
 \end{algorithm}



\subsection{Computational Complexity}

To best characterize the computational complexity of the proposed algorithm, we note first that the overall algorithm, i.e., Algorithm \ref{Algor2} solves three distinct problems sequentially. The problem of power allocation can be solved by implementing the derived closed-forms. In particular,  $3K$ equations are needed to be solved to find the variables  $P_j\ \forall j$, $P_i \forall i$, and $q_{i,j} \forall z_{i,j}=1$. For the user pairing problem, the computational complexity of Hungarian method is in the order of $O(K^3)$ \cite{grinman2015hungarian}. Finally, as shown earlier,  the link selection problem has a polynomial computational complexity $O(K\times K+K)$.  


\subsection{Baseline Approaches}
\subsubsection{Baseline 1 (NOMA approach)}
 The difference between NOMA and the proposed Co-NOMA scheme is that Co-NOMA allows the strong users to forward the weak users' signals through RF links (i.e., there is cooperation among users), which provides two options for the weak users, either to be served by the VLC AP or by the paired strong user through the hybrid VLC/RF link. In NOMA, however, the weak users have only one option, which is to be served through the direct VLC link (i.e., there is no cooperation among users in NOMA). NOMA can, therefore, be seen as a special case of the formulated problem, where the link selection vector $\mathbf{x}$ is set to zero throughout the optimization problem. Therefore, the optimal NOMA scheme can be found by allocating the power using the closed-form solutions (\ref{omga}) and (\ref{qij1}) for all possible user pairings. Under such approach, however, uncovered and blocked users would not be served.  It is important to note that if a strong user $j$ is paired with a blocked or uncovered user $i$ (i.e., $\Psi_i^{(w)}=0$), the power allocation for this pair is distributed as $P_j^{(s)}=q_{i,j}$, where $q_{i,j}$ is given by $q_{i,j}=\frac{w_{j}^{(s)}B_v}{2K\lambda}-\frac{1}{\Psi_j^{(s)}}$, which can be proven in a similar fashion as in proposition \ref{prop2}.
\subsubsection{Baseline 2} For the sake of additional algorithmic comparison, we also provide a simple solution from optimization perspective. Specifically, for pairing the users, we propose that the best strong user is paired with the worst weak one, the second best strong user is paired with the second worst weak user, and so on. The rationale behind adopting such approach as a benchmarking baseline is its capability to provide a relative fairness at a low computational complexity. For the link selection vector, this baseline chooses that each blocked or uncovered user (i.e., the users which have zero VLC channel) must be served through the relayed VLC/RF link, while the remaining weak users must be served through the direct VLC link. Under the above user pairing and link selection simple strategies, the power is then allocated using the derived closed-form solutions found in section \ref{PAWS}.
\begin{table}[!t]
\centering
\caption{Simulation Parameters}
\label{table1}
\begin{tabular}{|p{.45\textwidth} | p{.2\textwidth} |}
\hline
  Parameter Name& Parameter Value\\
 \hline

  Bandwidth of VLC AP, $B$ & $20$ MHz  \\

  The physical area of PDs, $A_{p}$ & $1$\ cm$^2$ \\
   Half-intensity radiation angle, $\theta_{1/2}$ & $60^o$\  \\
  Gain of optical filter, $g_{of}$ & $1$  \\
  Refractive index, $n$ & 1.5 \\
   Optical-to-electrical conversion factor, $\rho$& $0.53$ [A/W]\\
   Noise PSD of LiFi, $N_0$ & $10^{-21}$\ A$^2$/Hz  \\
  Maximum input bias current, $I_H$ & $600$ mA  \\

  Minimum input bias current, $I_L$ & $400$ mA  \\
  Fill factor, $f$ &0.75\\
  Electric-to-optical conversion factor, $\nu$ & 10 W/A\\

  Thermal voltage, $V_t$ & 25 mV \\

  Dark saturation current of the PD, $I_0$ & $10^{-10}$ A\\
  LED height, &$3$ m\\
  User height & $0.85$\\
  \hline

  RF    \\
  \hline
  The  breakpoint distance & 5 m\\
  Bandwidth & 16 MHz\\
  Central carrier frequency  & 2.4 GHz\\
  Angle of arrival/departure of LoS & 45$^o$\\
  Shadow fading standard deviation (before the breakpoint) & 3 dB  \\
  Shadow fading standard deviation (after the breakpoint) & 5 dB  \\
  PSD of the noise & -174 dBm/Hz\\
  \hline
\end{tabular}

 \end{table}

\section{Simulation Results}
This section evaluates the performance of the proposed hybrid VLC/RF Co-NOMA scheme and  the proposed solutions in terms of both sum-rate and system fairness. We investigate the FoV effects, number of users, blockage rate, and the cell size. Simulation parameters are given in Table \ref{table1}. The performance of the proposed algorithms is assessed through Monte-Carlo simulations, where every point in the numerical results is the average of implementing 1000 different user distributions within the given cell. The blockage rate is defined as the number of times that the user is blocked over the times of total simulation realizations. We use Jain's fairness index to measure the system fairness, which is given by
$\frac{\left(\sum_{j=1}^{K}\sum_{i=1}^{K}R_{i,j}\right)^2}{2K\left(\sum_{j=1}^{K}\sum_{i=1}^{K}R_{i,j}^2\right)}$.
\begin{figure}[!t]
\centering
\includegraphics[width=4in]{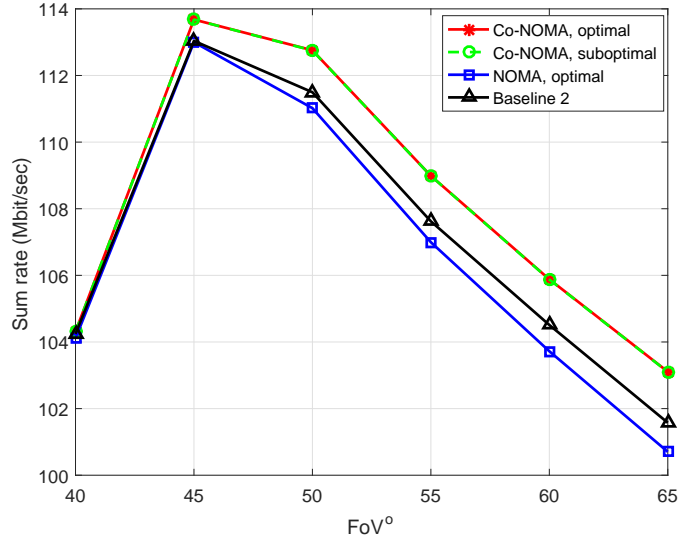}
\caption{Sum-rate versus users' FoV when number of users is $6$, the cell radius is $2.5$ m, and the blockage rate $0.1$.}
\label{SR_FoV}
\end{figure}

Fig. \ref{SR_FoV} compares the proposed hybrid VLC/RF Co-NOMA scheme and the exhaustive search with NOMA and the baseline approaches by plotting the sum-rate against the users' FoV. It can be seen from the figure that increasing the FoV of users leads to increasing the sum-rate, and then decreasing it for all approaches. Such behavior is due to the fact that the small users' FoV provides a potential for having some users to be uncovered, or to have have zero LoS channel gains. As the users' FoV first increases, the probability of coverage increases, which increases the sum-rate. But after some point, increasing the FoV would affect the
channel quality as illustrated through equations (\ref{vlcch}) and (\ref{FoVE}), which explains why the sum-rate decreases for larger values of the users' FoV.
\begin{figure}[!t]
\centering
\includegraphics[width=4in]{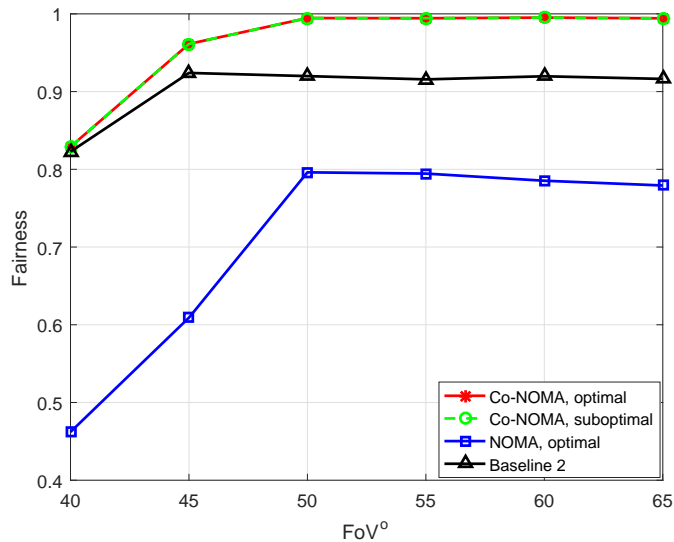}
\caption{System fairness versus users' FoV when number of users is $6$, the cell radius is $2.5$ m, and the blockage rate $0.1$.}
\label{F_FoV}
\end{figure}
Most importantly, the figure shows how the proposed Co-NOMA scheme outperforms all other algorithms for all values of the FoV, which highlights the important role of the proposed scheme in increasing the network throughput as compared to the classical NOMA scheme.

On the other hand, Fig. \ref{F_FoV} compares the proposed hybrid VLC/RF Co-NOMA scheme versus both NOMA and the proposed baseline approaches. Fig. \ref{F_FoV} plots Jain's fairness index versus the users' FoV. It can be seen that the fairness is low when the users' FoV is low, since the users which are far from AP would be out of the view (i.e., the LoS channel is zero), while the users that are close to the AP would get a good quality of service because of their channel quality. As the users' FoV increases, the probability that the number of covered users increases within a fixed certain area. Fig. \ref{F_FoV} particularly illustrates how the fairness of the proposed Co-NOMA scheme outperforms both NOMA scheme and the proposed baseline approach for all values of the FoV. In fact, both Fig.~\ref{SR_FoV} and Fig.~\ref{F_FoV} suggest that the proposed hybrid VLC/RF Co-NOMA scheme (optimal or suboptimal) outperforms  NOMA and the proposed baseline 2 in terms of both fairness and sum-rate. The improvement in terms of fairness is particularly pronounced, because  NOMA scheme cannot reach the out-of-coverage or blocked users, while the proposed hybrid VLC/RF Co-NOMA scheme can reach them through the hybrid VLC/RF relayed link. In addition, the hybrid VLC/RF links can provide the maximum fairness (rather than the direct VLC link) among the strong and weak users without affecting the sum-rate, as also illustrated earlier in Section \ref{PAWS}, Case 1.
\begin{figure}[!ht]
\centering
\includegraphics[width=4in]{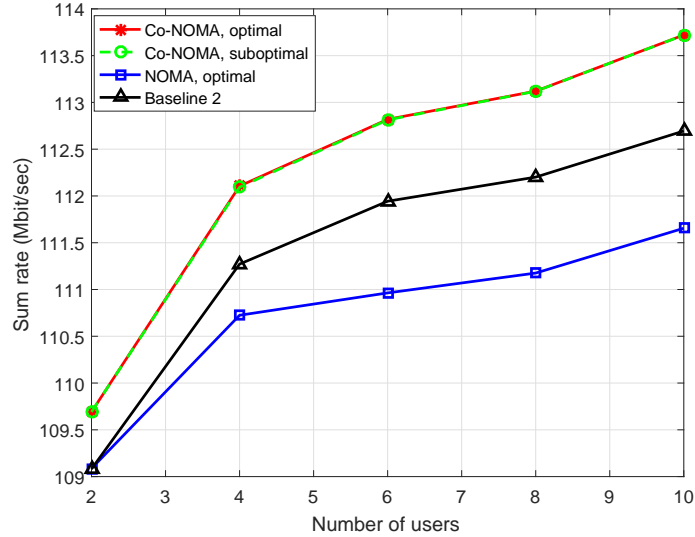}
\caption{Sum-rate versus number of users in the system when the cell radius is $2.5$ m, the blockage rate $0.1$, and the user FoV$=50^o$.}
\label{SR_NU}
\end{figure}

Fig. \ref{SR_NU} plots the sum-rate versus the total number of users located within a cell of radius $2.5$ m, while Fig. \ref{F_NU} plots the fairness of the same users and with the same cell size.
\begin{figure}[!ht]
\centering
\includegraphics[width=4in]{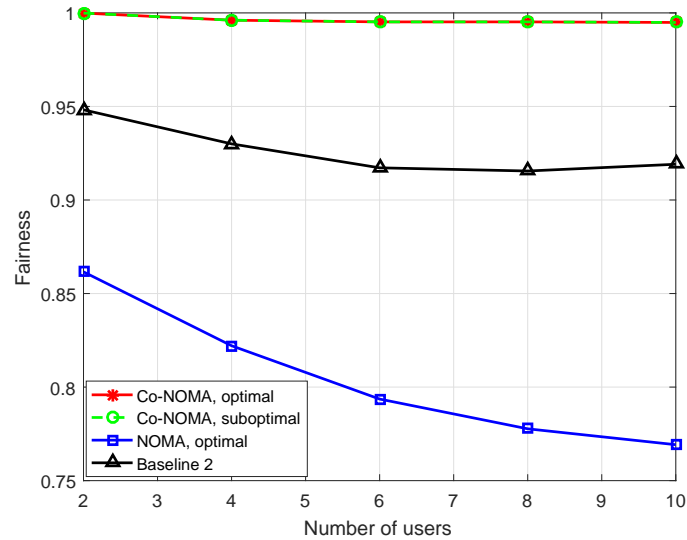}
\caption{System fairness versus number of users in the system when the cell radius is $2.5$ m, the blockage rate $0.1$, and the user FoV$=50^o$.}
\label{F_NU}
\end{figure}
In general, increasing the number of users in the system increases the sum-rate, but decreases the system fairness. However, this decrease in fairness (in Fig. \ref{F_NU}) is significant in the NOMA scheme and negligible in the proposed hybrid VLC/RF Co-NOMA scheme. On the other hand, the sum-rate in the Co-NOMA scheme increases in a faster rate than in NOMA. Figs. \ref{SR_NU} and \ref{F_NU} also show that the exhaustive search approach and the proposed Co-NOMA scheme provide the same performance, which is much better than the proposed baseline approaches.

\begin{figure}[!ht]
\centering
\includegraphics[width=4in]{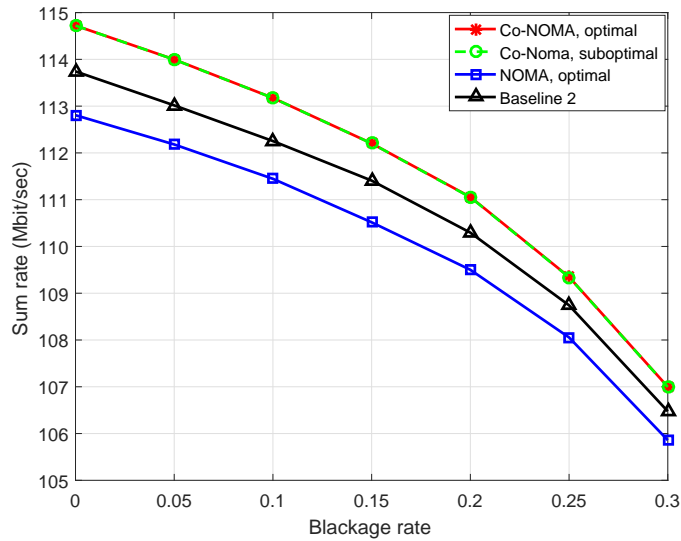}
\caption{Sum-rate versus blockage rate when, $N_u=6$, the cell radius is $2.5$ m,  and the users' FoV$=50^o$.}
\label{SR_BR}
\end{figure}

Fig. \ref{SR_BR} shows the effect of the blockage rate on the sum-rate. Increasing the blockage rate decreases the probability of the availability of the VLC LoS links to the users. In other words, the number of blocked users increases, which leads to decreasing the sum-rate of the system. Fig. \ref{SR_BR} particularly illustrates how the proposed hybrid VLC/RF Co-NOMA is better than NOMA for all given blockage rates, even when there is no blockage at all. This is because of the selection diversity at the weak user in Co-NOMA (the weak user in Co-NOMA can select the link that provides a maximum rate), while the weak user in NOMA has only one option to be served through (i.e., the VLC link).
\begin{figure}[!h]
\centering
\includegraphics[width=4in]{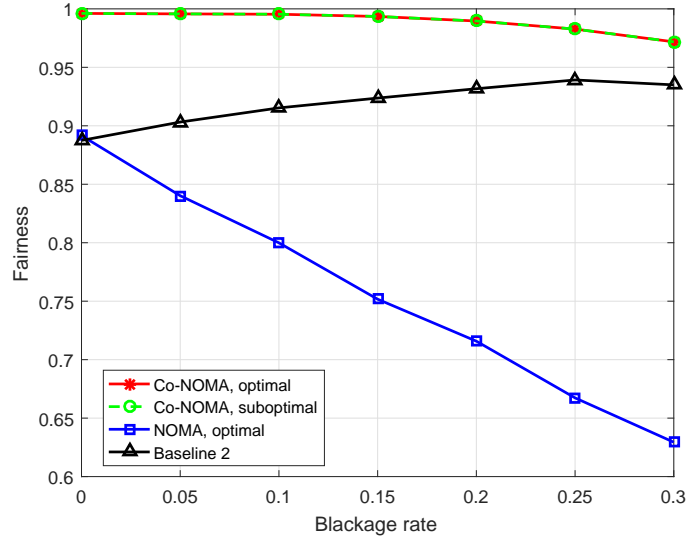}
\caption{System fairness versus blockage rate when, $N_u=6$, the cell radius is $2.5$ m, and the users' FoV$=50^o$.}
\label{F_BR}
\end{figure}

The effect of the blockage rate on the fairness is shown in Fig. \ref{F_BR}. The figure shows that increasing the blockage rate has a small impact on the fairness of the proposed hybrid VLC/RF Co-NOMA scheme until some point. This is because all the blocked users are considered as weak users and could be served through the paired strong users using the relayed link. But increasing the blockage rate further may result in having the number of blocked users greater than half of the total number of users, which affects also the fairness of the proposed Co-NOMA, albeit to a lesser degree than the impact shown on the NOMA fairness performance. The fairness of the baseline 2 increases with blocking rate since when there is no blockage, since all the users in the baseline 2 approach are served through the direct link, which results in fairness that is similar to NOMA. As the blockage rate increases, the number of served users through the relayed link increases, which results in approaching the maximum fairness reached by the proposed Co-NOMA scheme.

Fig. \ref{SR_CR} and Fig. \ref{F_CR}  show the impact of increasing the cell size on the sum-rate and fairness, respectively. As the cell size increases, the average channel quality decreases and the probability of having uncovered users increases. As a result, the sum-rate decreases as the cell size increases for Co-NOMA and NOMA and with different users' FoV.
\begin{figure}[!t]
\centering
\includegraphics[width=4in]{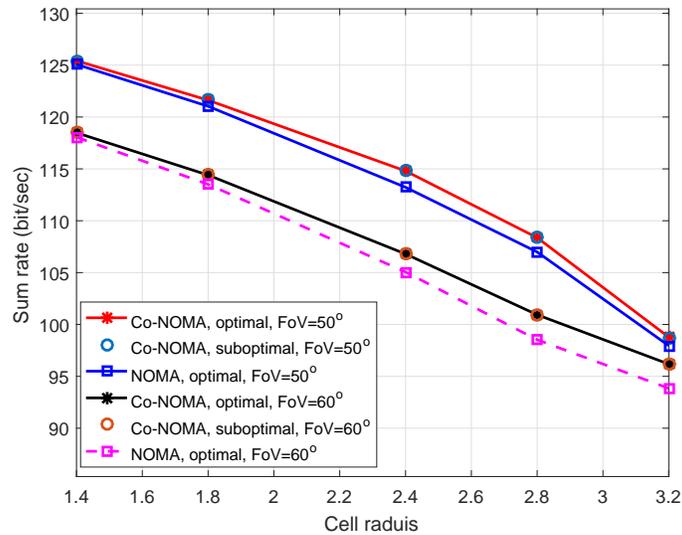}
\caption{Sum-rate versus the cell size when, $N_u=6$, blockage rate is 0.1,  and with different users' FoV.}
\label{SR_CR}
\end{figure}
\begin{figure}[!h]
\centering
\includegraphics[width=4in]{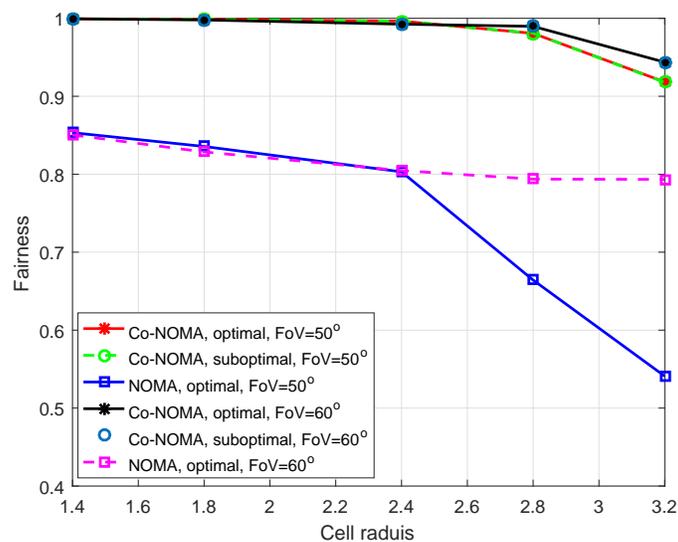}
\caption{System fairness versus the cell size when, $N_u=6$, blockage rate is 0.1,  and with different users' FoV.}
\label{F_CR}
\end{figure}
On the other hand, Fig. \ref{F_CR} shows that the fairness slightly decreases as the cell-radius increases for the proposed hybrid VLC/RF Co-NOMA scheme. This is because such scheme extends the coverage area by the RF link and increases the probability of coverage. In contrast, the NOMA scheme cannot reach the out-of-coverage users, which leads to having a high rate of reduction in system fairness. However, increasing the users' FoV would increase the coverage probability, but at the expense of decreasing the channel quality; thereby degrading the system sum-rate as shown in Fig. \ref{SR_CR}. Once again, Fig. \ref{SR_CR} and Fig. \ref{F_CR} highlight how the proposed Co-NOMA scheme outperforms the NOMA approach, both in terms of sum-rate and fairness, and for different cell sizes.

\section{Conclusion}
VLC is expected to be one of the candidate technologies in meeting the targeted requirements of next generation wireless communication networks. This paper introduces a novel cooperative scheme among users for extending coverage, improving sum-rate, and maximizing fairness in VLC systems. This cooperation is based on Co-NOMA, which can provide another chance for poorly serviced users to be served through a hybrid dual-hop VLC/RF link with the help of the well serviced users. The paper formulates an optimization problem that maximizes the weighted sum-rate by jointly allocating the power for users, pairing the users, and selecting the links for the weak users. An efficient, heuristic, iterative solution is proposed and compared with the exhaustive search approach, a simpler baseline solution, and with the traditional NOMA scheme. Simulation results show that a significant performance improvement in terms of sum-rate and fairness can be achieved by applying the proposed scheme and by jointly optimizing the system.

%
%
\appendices
\section{Proof of Proposition \ref{prop1}}
\label{AppA}
By writing the objective function in (\ref{EHM3}) as\\  \resizebox{0.6\hsize}{!}{$R_{i,j}=w_j^{(s)}\frac{B_v}{2K}\log_2(1+\Psi_j^{(s)} P_j^{(s)})+w_i^{(w)}\frac{B_v}{2K}\log_2(\frac{\Psi_j^{(s)} q_{i,j}+1}{\Psi_j^{(s)} P_j^{(s)}+1})$}, and by
  differentiating the function $R_{i,j}$ with respect to $P_j^{(s)}$, we obtain
\begin{equation}
\label{dR}
\frac{dR_{i,j}}{dP_j^{(s)}}=\frac{\Psi_j^{(s)} B_v(w_j^{(s)}-w_i^{(w)})}{2KP_j^{(s)}}.
\end{equation}
From (\ref{dR}), we can see that the objective function is an increasing function of $P_j^{(s)}$ if $w_j^{(s)}>w_i^{(w)}$, a decreasing function if $w_j^{(s)}<w_i^{(w)}$, and constant if $w_j^{(s)}=w_i^{(w)}$. This means that the sum-rate function is a constant function of $P_j^{(s)}$ and modifying the values of $w_i^{(w)}$ and $w_j^{(s)}$ just affects the weighted sum-rate but not the sum-rate itself. Hence, the maximum fairness can be implemented without any degradation in the sum-rate with setting $w_i^{(w)}=w_j^{(s)}$ and  having that the rate of the strong user equal to the rate of the weak user. To achieve that, $P_j^{(s)}$ must be selected to satisfy the following relation
\begin{equation}
\label{rb}
\frac{B_v}{2K}\log_2(1+\Psi_j^{(s)} P_j^{(s)})=\frac{B_v}{2K}\log_2(\frac{\Psi_j^{(s)} q_{i,j}+1}{\Psi_j^{(s)} P_j^{(s)}+1}).
\end{equation}
Solving (\ref{rb}), we obtain that $P_j^{(s)}=\eta_{i,j,1}$, where $\eta_{i,j,1}$ is given by
\begin{equation}
\label{eta1}
\eta_{i,j,1}=\frac{-1+\sqrt{1+q_{i,j}\Psi_j^{(s)} }}{\Psi_j^{(s)}},\  \ P_i^{(w)}=q_{i,j}-\eta_{i,j,1}.
\end{equation}
\section{Proof of Proposition \ref{prop2}}
\label{AppB}
It can be seen that the Hessian matrix of the objective function in (\ref{EHM5}) is negative definite whether $F_j^{(s)}(q_{i,j})$ and $F_i^{(w)}(q_{i,j})$ are given by (\ref{Fj1}) and (\ref{Fi1}), or given by (\ref{Fj2}) and (\ref{Fi2}). In addition, the constraints (\ref{EHM5}) are linear, which shows that problem (\ref{EHM5}) is convex. To find an optimal closed-form solution for $q_{i,j} \ \forall i,j$, write first the Lagrangian dual function:
\begin{multline}
 \resizebox{0.91\hsize}{!}{$\zeta= - \sum_{i=1}^K\sum_{j=1}^K x_iz_{i,j}\left(w_{j}^{(s)}F_j^{(s)}(q_{i,j})+w_{i}^{(w)}F_{i}^{(w)}(q_{i,j})\right)
-\sum_{i=1}^K\sum_{j=1}^K (1-x_i)z_{i,j}w_i^{(w)} \frac{B_v}{2K}\log_2(1+\Omega_{i,j}\Psi_j^{(s)})$}\\
 \resizebox{0.91\hsize}{!}{$-\sum_{i=1}^K\sum_{j=1}^K (1-x_i)z_{i,j}w_i^{(w)} \frac{B_v}{2K}\log_2\left(\frac{q_{i,j}\Psi_i^{(w)}+1}{\Omega_{i,j}\Psi_i^{(w)}+1}\right)+
\lambda\left(\sum_{i=1}^K\sum_{j=1}^K q_{i,j}-P_{max}\right)+\sum_{i=1}^K\sum_{j=1}^K \mu_{i,j}q_{i,j}$},
\end{multline}
where $\lambda$ is the dual variable associated with the sum-power constraint. Based on the first-order Karush-Kuhn-Tucker (KKT) conditions \cite{Boyd}, we have
\begin{equation}
\label{zet}
\frac{\partial \zeta}{\partial q_{i,j}}=0,\ \forall i,j.
\end{equation}
We have three cases. In the first case, if $z_{i,j}=0$ (i.e., users $i$ and $j$ are not paired), $q_{i,j}=0$ whether $x_i=1$ or $x_i=0$. The second case occurs when $z_{i,j}=1$ and $x_i=1$, and so we can reformulate (\ref{zet}) as
\begin{multline}
\label{qf}
\frac{\partial}{\partial q_{i,j}} \big[ -w_{j}^{(s)}F_j^{(s)}(q_{i,j})-w_{i}^{(w)}F_{i}^{(w)}(q_{i,j})+\lambda (q_{i,j}-P_{max})+
 \mu_{i,j}q_{i,j}\big]=0.
\end{multline}
If $F_j^{(s)}$ and $F_i^{(w)}$ are given by (\ref{Fj1}) and (\ref{Fi1}), respectively, (\ref{qf}) can be given by
\begin{multline}
\label{xe1}
\frac{\partial}{\partial q_{i,j}} \big[ -2w_{j}^{(w)}\frac{B_v}{2K}\log_2(\sqrt{\Psi_j^{(s)}q_{i,j}+1})+\lambda (q_{i,j}-P_{max}) 
+ \mu_{i,j}q_{i,j}\big]=0,
\end{multline}
where $w_i^{(w)}=w_j^{(s)}$ because $F_j^{(s)}=F_i^{(w)}$ in this case. On the other hand, if $F_j^{(s)}$ and $F_i^{(w)}$ are given by (\ref{Fj2}) and (\ref{Fi2}), respectively, (\ref{qf}) can be given by
\begin{multline}
\label{xe4}
\frac{\partial}{\partial q_{i,j}} \big[ -w_{j}^{(w)}\frac{B_v}{2K}\log_2(\frac{\Psi_j^{(s)}q_{i,j}+1}{A})-w_i^{(w)}R_{i,j}^{RF}(\mathbf{x})
+\lambda (q_{i,j}-P_{max})+ \mu_{i,j}q_{i,j}\big]=0,
\end{multline}
Solving (\ref{xe1}) or (\ref{xe4}), we obtain the same expression for $q_{i,j}$, which is given by
\begin{equation}
\label{qij2}
q_{i,j}=\big[\frac{w_{j}^{(w)}B_v}{2K\lambda}-\frac{1}{\Psi_j^{(s)}}\big]^+.
\end{equation}
Finally, in the third case, if we have $z_{i,j}=1$ and $x_i=0$ (the weak user is served through the direct VLC link), which allows to rewrite the first-order condition (\ref{zet}) as follows:
\begin{multline}
\label{xe2}
\frac{\partial}{\partial q_{i,j}} \bigg[ -w_{i}^{(w)}\frac{B_v}{2K}\log_2(\frac{q_{i,j}\Psi_i+1}{\Omega_{i,j}\Psi_i^{(w)}+1})+\lambda (q_{i,j}-P_{max})
+ \mu_{i,j}q_{i,j}\bigg]=0
\end{multline}
Solving (\ref{xe2}), we get
\begin{equation}
\label{qij3}
q_{i,j}=\bigg[\frac{w_{i}^{(w)}B_v}{2K\lambda}-\frac{1}{\Psi_j^{(s)}}\bigg]^+.
\end{equation}
\section*{Acknowledgment}
The authors acknowledge funding from the Research and Development (R\&D) Program (Research Pooling Initiative), Ministry of Education, Riyadh, Saudi Arabia.

\bibliography{mylib}

\bibliographystyle{IEEEtran}

\end{document}